\newcommand{\Def}{\overset{\mathit{def}}{=}}
\newtheorem{lemma}{Lemma}
\newtheorem{remark}{Remark}
\newtheorem{corollary}{Corollary}
\begin{document}

\title{Dynamics of the Kac Ring Model with switching scatterers}

\author{Leonid A. Bunimovich}
\email{leonid.bunimovich@math.gatech.edu}
\affiliation{
School of Mathematics, Georgia Institute of Technology, Atlanta, GA 30332, USA.}

\author{Emilio N. M. Cirillo}
\email{emilio.cirillo@uniroma1.it}
\affiliation{Dipartimento di Scienze di Base e Applicate per l'Ingegneria, 
             Sapienza Universit\`a di Roma, 
             via A.\ Scarpa 16, I--00161, Roma, Italy.}

\author{Matteo Colangeli}
\email{matteo.colangeli1@univaq.it}
\affiliation{Dipartimento di Ingegneria e Scienze dell'Informazione e Matematica, Universit\`a degli Studi dell’Aquila, Via Vetoio, I--67100 L’Aquila, Italy.}

\author{Lamberto Rondoni}
\email{lamberto.rondoni@polito.it}
\affiliation{Dipartimento di Scienze Matematiche, Politecnico di Torino, Corso Duca degli Abruzzi 24, I--10129, Turin, Italy.}
\affiliation{INFN, Sezione di Torino,Via Pietro Giuria 1,I--10125, Turin, Italy.}

\begin{abstract}

We introduce a generalized version of the Kac ring model in which particles are of two types, black and white. Black particles modify the environment through which all particles move, thereby inducing indirect and potentially long-range interactions among them. Unlike the inert scatterers of Kac's original model, the scatterers in our setting possess internal states that change upon interaction with black particles and can be interpreted as energy levels of the environment. This makes the model self-consistent, as it incorporates a form of particle interactions, mediated by the environment, that drives the system toward some kind of stationary state. 
Although indirect and long-range interactions do not necessarily promote thermodynamic states, interactions are necessary for energy to be shared among the elementary constituents of matter, enabling the establishment of equipartition, which is a prerequisite for defining temperature. Therefore, our model is one step forward in this direction, elucidating the role of interactions and energy exchange.
We prove that any initial state of the system converges to a time periodic state (i.e. a phase space orbit) and describe basins of attraction for some of such asymptotic periodic states.

\end{abstract}

\maketitle



\section{Introduction}
\label{s:intro} 
\par\noindent

The seminal Kac ring model, introduced by Mark Kac in 1956, was designed to illustrate a role of probabilistic concepts in deriving the second law of thermodynamics from classical mechanics~\cite{Kac1956,Kac1959}; see also~\cite{Lebowitz1973,Spohn1978,Thoms86} for additional insights into the model, and~\cite{DeRoeck2003,Chandrasekaran2012,Jebeile2020,Hiura2021,Gill2023} for recent extensions.

The Kac ring is a deterministic cellular automaton governed by a set of dynamical rules that specify the influence of some scatterers on black and white particles moving around the ring in discrete time steps. Despite its artificial nature, and in the spirit of the earlier stochastic ``dog-flea'' model proposed by P. and T. Ehrenfest \cite{Ehren07}, this model offers valuable insight into the mathematical mechanisms underlying the emergence of irreversibility in many-particle systems. As clearly emphasized in \cite{Bricmont1995}, the Kac ring illustrates the paradigmatic features of a macroscopic system in classical mechanics: it is isolated, reversible and it also exhibits a sort of Poincar\'{e} recurrence. However, apart from exceptional cases that become a vanishing fraction in the large system limit, the recurrence times for any other given system diverge with the number of particles. This expresses in idealized terms Boltzmann's response to Zermelo: macroscopic irreversibility for a single system with microscopic reversible dynamics sets in as the number of particles grows, since any anti-H-theorem behavior is rapidly pushed beyond any physically meaningful timescale. 

A key aspect of the model, in particular, lies in the reinterpretation of Boltzmann's statistical assumption of \textit{molecular chaos}, considering the evolution of ensemble averages. 
The model succeeds to merge, within a unifying picture, the time-reversible dynamics of the particles on the ring with the irreversible behavior of a macroscopic order parameter concerning the system as a whole. The point is that in the large system limit, the statistical (under many respects trivial) irreversibility of ensemble averages becomes the {\em typical} behaviour, which is the behaviour of the vast majority of single systems. The exceptional cases, corresponding to specific choices of initial conditions, become a vanishing fraction of the whole. Naturally, for any finite number of particles, the model reveals that Zermelo's objection to Boltzmann's theory, based on Poincar\'{e}'s recurrence theorem, is valid, since the system returns close to its initial state, after sufficiently long times. However, Boltzmann's reply was that these times are way too long to bear any physical relevance for any macroscopic aggregation of particles, assumed and not granted that the same model could still be applied to describe the system of interest. See e.g. Refs. \cite{Lebowitz1973,Goldstein2012,Pitowsky2012,Chibbaro2014} for extended discussions of these issues.

Although the Kac ring model is well suited to analytical treatment and has clarified several foundational aspects of statistical mechanics, analogously to the Lorentz gas \cite{Bricmont1995,Cohen2002}, its original form lacks particle interactions, hence any kind of energy exchange, which constitutes an essential ingredient for the establishment of local thermodynamic equilibrium. 

Inspired by recent works on stochastic cellular automata with long-range particle interactions \cite{CDP16,CDP17,CDP17b}, which revealed a rich phase diagram, including the presence of metastable regions and phase transitions, we propose an extended version of the classical Kac ring model, which we call the Generalized Kac Ring (GKR). The GKR incorporates novel features absent from the classical version, the most prominent being the introduction of mutual interactions between particles and environment (the scatterers).

In the GKR, particles influence the state of the environment over a characteristic time scale determined by a parameter referred to as the \textit{rigidity of the environment} \cite{Bun04}. The dynamics of particles alter the environment, which in turn modifies their behavior, thereby creating indirect interactions among the particles mediated by their shared environment.
These interactions are non-local and may even be long-ranged. As such, they do not necessarily drive relaxation toward a thermodynamic state, as observed in other models where correlations decay too slowly (see, e.g.,~\cite{BonLebCh}). Nevertheless, they introduce an effective contribution to the energy that influences the dynamics, representing a step toward the key ingredients required for defining temperature and, ultimately, for the establishment of local thermodynamic equilibrium~\cite{Spohn}.

Increasing the rigidity reduces the strength of these effective interactions, and in the limiting case of infinite rigidity the environment ceases to evolve, thereby recovering the original non-interacting Kac model.

This limiting behavior is reminiscent of the transition from a system of particles with two different masses, namely test particles and field particles, which can reach thermodynamic equilibrium since momentum and energy are preserved during collisions, to the Sinai billiard \cite{Sinai1963}, where the scatterers are regarded as infinitely more massive than the moving particles, so that momentum conservation is lost during collisions. The singularity of this limit lies in the fact that a larger mass ratio leads to a smaller exchange of energy and momentum at each collision, which makes equilibration require increasingly long times. In the asymptotic regime of infinite mass ratio, collisions do not preserve momentum, and equilibration does not take place.

Another distinctive feature of the GKR lies in the breaking of color symmetry in the particle–environment interaction, a property called \emph{selectivity}. This mechanism is directly responsible for the emergence of multiple attractors in the phase space of the model, a phenomenon absent in the original Kac ring. Selectivity thus provides an imprint of irreversibility, and in some respects is reminiscent of hysteresis phenomena \cite{Ruelle}. Both the possibility of equilibration between particles and environment, and the appearance of hysteresis-like effects, mark a substantial step beyond the original Kac ring in the modeling of thermodynamic systems.

The paper is organized as follows.
In Sec. \ref{s:model} we introduce the GKR model and outline some general results about the asymptotic behavior of phase space orbits in this model.
Section \ref{s:one} focuses on one-particle systems, whereas Sec. \ref{s:two} analyzes two-particle models.
Future perspectives and conclusions are discussed in Sec. \ref{s:concl}.

\begin{figure}
     \centering
         \includegraphics[width=0.3\textwidth]{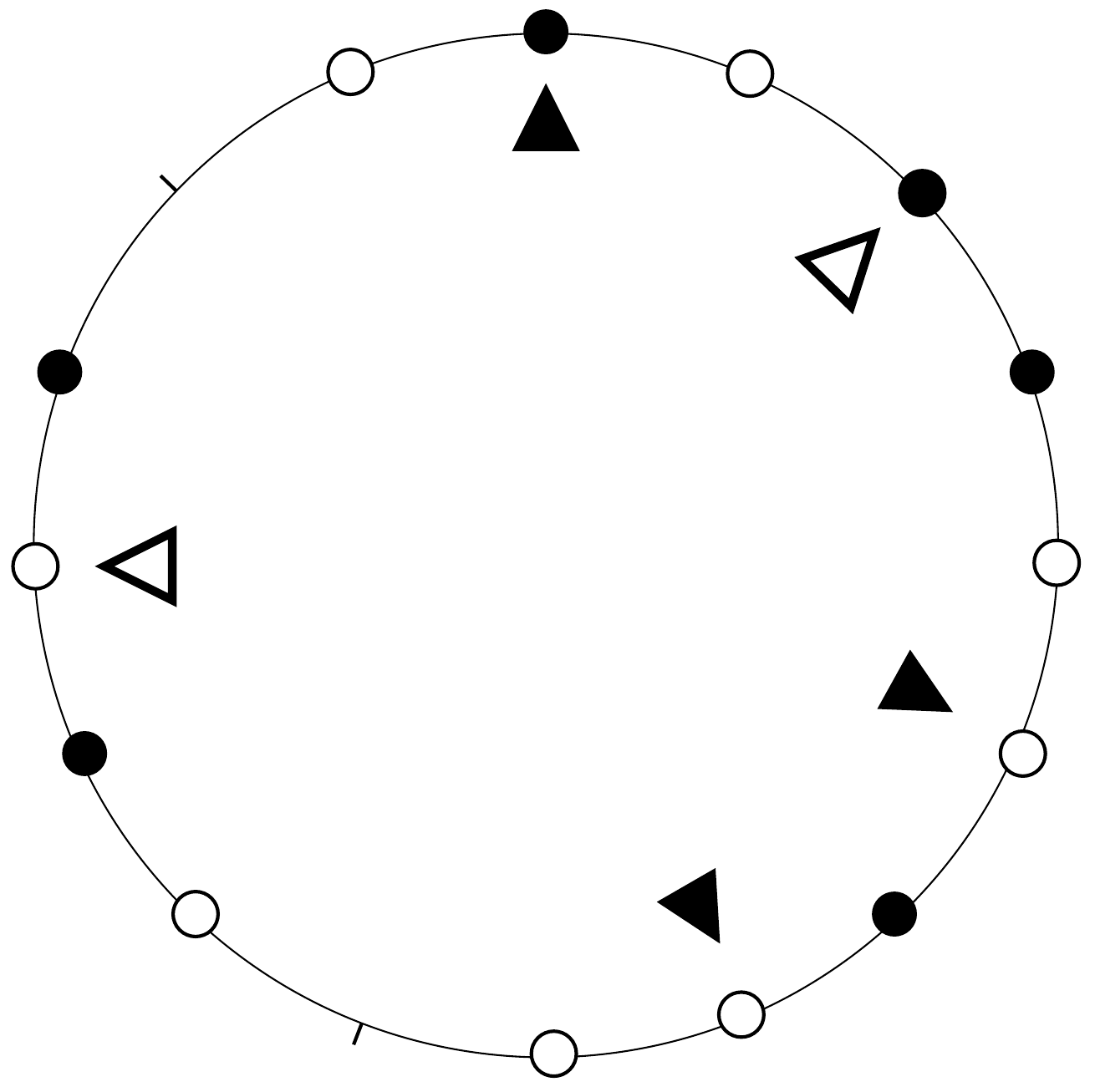}
        \caption{The GKR model consists of a ring with $L$ sites, populated at time $t=0$ by $\hat{N}_b$ black particles and $\hat{N}_w$  white particles (depicted as solid and empty disks in the figure), with $N=\hat{N}_b+\hat{N}_w\le L$. Active and passive scatterers are represented by filled and empty triangles, respectively. At each integer time step, particles move clockwise to the nearest neighboring site on the ring. When a particle encounters an active scatterer, it instantaneously changes color: black becomes white, and white becomes black. In addition, a scatterer also switches its state (from active to passive or vice versa) after undergoing a fixed number of collisions with black particles. This threshold is defined by the parameter $r$, called rigidity.} 
        \label{fig:model}
\end{figure}

\section{Model}
\label{s:model} 
\par\noindent

The GKR model, illustrated in Fig.~\ref{fig:model}, is defined as follows.  
Given positive integers $r$ and $L$, let the \emph{set of states} be  
$Q = \{-1,0,1\} \times \{-1,0,1\} \times \{0,\dots,r-1\}$,  
and let the $L$ point annulus be  
$\Lambda_L = \mathbb{Z}/L\mathbb{Z} = \{0,1,\dots,L-1\}$.  
The corresponding \emph{configuration space} is  
$X_L = Q^{\Lambda_L}$.  

For $x=(x_0,\dots,x_{L-1}) \in X_L$, the element $x_i$ denotes the \emph{state} of site $i$, given by the triple  
$x_i = (o_i, s_i, c_i)$,  
where  
\begin{itemize}
    \item $o_i$ is the \emph{occupation number} of site $i$,  
    \item $s_i$ is the \emph{scatterer state} of site $i$, and  
    \item $c_i$ is the \emph{counter} at site $i$.  
\end{itemize}

The interpretation of these components is as follows:  
\begin{itemize}
    \item $o_i=-1,0,+1$ means that site $i$ is occupied by a white particle, is empty, or is occupied by a black particle, respectively.  
    \item $s_i=-1,0,+1$ indicates, respectively, that site $i$ hosts an active scatterer, no scatterer, or a passive scatterer.  
    \item $c_i \in \{0,\dots,r-1\}$ is the value of the local counter at site $i$.  
\end{itemize}

We consider the discrete time variable $t=0,1,\dots$ and define the deterministic dynamics as follows. At each time step, all particles simultaneously move one site clockwise. Upon arrival, if a particle encounters an active scatterer, its color is instantaneously flipped (from black to white or vice versa). In addition, the local counter at that site is incremented by one (modulo $r$) whenever the incoming particle is black.  

This mechanism embodies the \emph{selectivity} property introduced in Sec.~\ref{s:intro}: only black particles are capable of modifying the environment, by advancing the counters of both active and passive scatterers. White particles, in contrast, leave scatterers unchanged, though they remain affected by active scatterers, as illustrated schematically in Fig.~\ref{fig:inter}.

\begin{figure}
     \centering
         \includegraphics[width=0.45\textwidth]{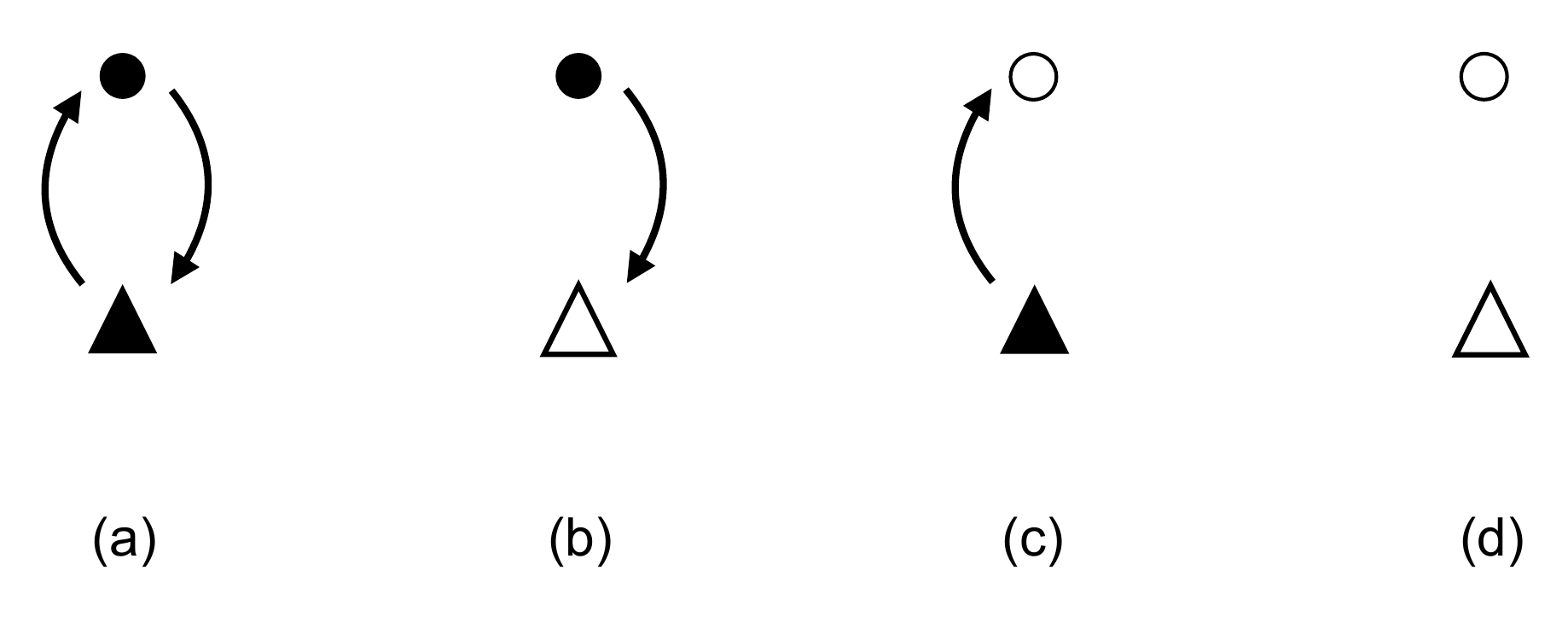}
        \caption{Specific interactions between particles and scatterers in the GKR model: (a) a black particle and an active scatterer, (b) a black particle and a passive scatterer, (c) a white particle and an active scatterer, (d) a white particle and a passive scatterer. An oriented arrow denotes the interaction exerted by the element at the tail on the element at the head, while the absence of an arrow indicates the lack of interaction.} 
        \label{fig:inter}
\end{figure}

The state of a scatterer switches from active to passive, or vice versa, whenever its counter reaches the value $0$.  
The initial condition is chosen such that, at the starting time, all counters are set to zero.  

More algorithmically, the dynamics can be described as follows: for $t\ge1$ and $i\in\Lambda_L$
\begin{itemize}
\item $o_{i}(t)=o_{i-1}(t-1)[1+s_{i}(t-1)-|s_{i}(0)|]$;
\item $c_i(t)=|s_{i}(0)|
              \big[
                c_i(t-1)
                +\delta_{o_{i-1}(t-1),1}
               \big]
                  \!\!\!\mod r$;
\item $s_i(t)=s_i(t-1)
              \big[1
                   -2
                    \delta_{o_{i-1}(t-1),1}
                    \delta_{c_i(t),0}
              \big]$;
\end{itemize}
where $\delta$ is the Kronecker $\delta$ function.
We shall discuss the dynamics with several different initial conditions, 
but we stress that we shall always consider the case 
$c_i(0)=0$ for all $i\in\Lambda_L$.

We note that the dynamics, which, due to the simultaneous updating
rule, constitutes a \emph{Cellular Automaton}, preserves 
the total number of 
particles $N=\sum_{i\in\Lambda_L}|o_i|$ 
and 
the total number of 
scatterers $S=\sum_{i\in\Lambda_L}|s_i|$.
On the other hand, 
the numbers of white and black particles, 
$N_w=\sum_{i\in\Lambda_L}\delta_{-1,o_i}$
and 
$N_b=\sum_{i\in\Lambda_L}\delta_{1,o_i}$, 
as well as 
the numbers of active and passive scatteres, 
$S_a=\sum_{i\in\Lambda_L}\delta_{-1,s_i}$
and 
$S_p=\sum_{i\in\Lambda_L}\delta_{1,s_i}$, 
can change with time.

The time intervals $\{1+kL,2+kL,\dots L+kL\}$, for $k=0,1,\dots$, 
will be called \emph{sweeps}.
 
Our aim is to analyze the evolution and long-term behavior of the GKR. To this purpose, we define a set of observables that describe the macroscopic state of the system.
We begin by defining:
\begin{equation}
    \chi(t) = \frac{N_b(t) - N_w(t)}{N}, \quad \Phi(t) = \frac{S_p(t) - S_a(t)}{S}, \label{orderpam}
\end{equation}
where $\chi(t), \Phi(t) \in [-1, 1]$ for all $t \geq 0$.
Next, for each site $i \in \Lambda_L$, we define:
\begin{equation}
    \sigma_i(t) =
    \begin{cases}
        1 & \text{if } |o_i(t)| = 1 \text{ and } s_i(t) = -1, \\
        0 & \text{otherwise},
    \end{cases}
\end{equation}
so that a nonzero value of $\sigma_i(t)$ signals a color reversal event induced by an active scatterer acting on a particle located at site $i$ at time $t$.
We then define the observable:
\begin{equation}
    \Sigma(t) = \frac{1}{N}\sum_{i \in \Lambda_L} \sigma_i(t), \label{Sigma}
\end{equation}
which measures the number of color reversal events occurring during one complete loop around the ring at time $t$, normalized by the number of particles.
. 



We now state a general result for the GKR model, which includes the original Kac model as a special case. It says that, for any value of the rigidity $r$, all trajectories of the system are eventually periodic, meaning that after a finite transient time any orbit becomes periodic.

\begin{lemma}
For any value of the rigidity $r>0$, all trajectories in the GKR model are eventually periodic.
\label{lemma1}
\end{lemma}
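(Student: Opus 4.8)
The plan is to reduce the statement to the elementary fact that a deterministic map on a finite set admits only eventually periodic orbits. The first observation is that the configuration space $X_L=Q^{\Lambda_L}$ is finite: since $|Q|=3\cdot 3\cdot r=9r$ is finite for any finite rigidity $r>0$, we have $|X_L|=(9r)^L<\infty$. This is the structural ingredient that makes the whole argument work, and it is where the hypothesis $r>0$ (an integer) enters, guaranteeing a finite counter range at each site.

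The second step, and the only delicate one, is to verify that the one–step evolution is genuinely a map $T:X_L\to X_L$ depending on the current configuration alone, rather than on the full history. This is not immediate, because the update rules for $o_i(t)$, $c_i(t)$ and $s_i(t)$ all refer explicitly to the initial scatterer data $|s_i(0)|$, which superficially makes the evolution non-autonomous. I would remove this apparent time dependence by establishing that the scatterer occupation pattern is invariant, that is $|s_i(t)|=|s_i(0)|$ for every $i$ and every $t$. This follows at once from the third update rule: the bracketed factor $[1-2\,\delta_{o_{i-1}(t-1),1}\,\delta_{c_i(t),0}]$ takes only the values $\pm1$, so $|s_i(t)|=|s_i(t-1)|$, and invariance follows by induction on $t$. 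Consequently, in all three rules the occurrence of $|s_i(0)|$ may be replaced by $|s_i(t-1)|$, and $x(t)$ becomes a deterministic function of $x(t-1)$ alone, which defines $T$. Note that $T$ need not be injective, and indeed this is precisely why the conclusion is \emph{eventual} periodicity (allowing a transient) rather than exact periodicity from the start.

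With $T$ in hand, the conclusion follows by a pigeonhole argument. Fixing an initial configuration $x$, the forward orbit $x, Tx, T^2x,\dots$ is a sequence in the finite set $X_L$, so among its first $|X_L|+1$ terms two must coincide: there exist integers $0\le m<n\le|X_L|$ with $T^m x=T^n x$. Applying $T^k$ to both sides and using that $T$ is a function gives $T^{m+k}x=T^{n+k}x$ for every $k\ge0$, so the orbit is periodic with period dividing $n-m$ from time $m$ onward. Hence every trajectory is eventually periodic, for any $r>0$. The main (indeed the only) obstacle is the invariance argument of the second step: one must be careful that the explicit appearance of the initial data $|s_i(0)|$ in the update rules does not actually render the dynamics history-dependent. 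Once the scatterer pattern is shown to be frozen, everything else is standard.
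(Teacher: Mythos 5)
Your proposal is correct, and at its core it is the same argument as the paper's: the state space is finite, the dynamics is deterministic, so the pigeonhole principle forces a repeated configuration, after which the orbit cycles. The differences are in execution, and your version is tighter on exactly the two points where the paper's proof is loose. First, the paper reduces to sweep times (multiples of $L$, when every particle has returned to its starting site) and counts $2^N$ particle-color states times $2^{S}$ scatterer states; this count omits the counter values $c_i \in \{0,\dots,r-1\}$, which are part of the state and must be included for the pigeonhole argument to apply (the correct count at sweep times would be $2^N (2r)^S$, still finite, so the conclusion survives, but as written the count is wrong). Your bound $|X_L| = (9r)^L$ includes the counters from the outset and needs no reduction to sweep times. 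Second, you are the only one to address why the evolution is a genuine self-map of configuration space at all: the update rules as written involve $|s_i(0)|$, which superficially makes the dynamics non-autonomous, and your observation that the bracket in the $s_i$ update equals $\pm 1$, hence $|s_i(t)| = |s_i(t-1)| = |s_i(0)|$ by induction, is precisely what legitimizes replacing $|s_i(0)|$ by $|s_i(t-1)|$ and treating the evolution as a map $T:X_L \to X_L$. The paper takes this step for granted; without it, "a repeated state implies periodicity" does not follow, since a history-dependent rule could evolve the same configuration differently at different times. Both proofs then conclude identically from determinism plus a repeated state.
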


\begin{proof}
Recall that all particles move with one and the same speed (which equals one). Therefore at the moment L, which is equal to the length of the lattice each particle comes to its initial position. Observe that the scatterers do not move and therefore they have fixed positions. The set of all possible states of scatterers is finite. Indeed there is a finite number of scatterers, and each scatterer can be in one of two positions. Therefore there exist some moment of time (which is proportional to the length L of the lattice) when the state of the system repetes itself. Indeed, observe that there is a finite number of states of the particles located in their initial positions (learly, this number equals $2^N$, where $N$ is a number of particles). Also, there is $2^r$ different states of configuration of scatterers, where $r$ is a number of scatterers. Thus, a total number of states of the system which may appear at the moments of time proportional to $L$ is finite. Therefore the statement of lemma holds.
\end{proof}

\begin{remark}
  The limiting periodic state of the model is not unique. It depends on the initial configuration of particles and scatterers and on their initial states. Generally there are many (but a finite number) of the limiting periodic states. The times until a trajectory get into a limiting periodic state can be very large (it generally exponentially depends on the parameters of the model.   
\end{remark}
The behavior of the classical Kac ring is recovered as a special case of Lemma \ref{lemma1}, as follows:
\begin{corollary}
In the original Kac ring model, all trajectories are also eventually periodic.
Indeed, the rigidity of the scatterers in the classical Kac ring is infinite, so their states remain unchanged throughout the evolution.
\end{corollary}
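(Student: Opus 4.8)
The plan is to exhibit the classical Kac ring as the degenerate case of the GKR in which the scatterers are frozen, and then to rerun---indeed sharpen---the counting argument behind Lemma~\ref{lemma1}. The starting observation is that ``infinite rigidity'' means precisely that no counter ever returns to zero after the initial instant. Since $c_i(0)=0$ and the counter advances by one modulo $r$ only when a black particle arrives, taking $r\to\infty$ prevents any counter from ever cycling back to $0$; by the switching rule a scatterer changes state only when its counter reaches $0$, so $s_i(t)=s_i(0)$ for every $i$ and every $t$. The scatterer field is therefore a fixed parameter, and only the occupation numbers $o_i$ evolve.

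First I would record that, with the scatterers frozen, the update rule for $o_i(t)$ depends solely on $o_{i-1}(t-1)$ and the constant data $s_i(0)$, so the dynamics reduces to a deterministic self-map on the finite set $\{-1,0,1\}^{\Lambda_L}$ of occupation configurations. Finiteness together with determinism is exactly the content exploited in the proof of Lemma~\ref{lemma1}: along any orbit some occupation configuration must recur by the pigeonhole principle, and determinism then forces the orbit to be periodic from that point onward. This already yields eventual periodicity, hence the corollary.

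Then I would sharpen the conclusion to the classical period by tracking a single particle across two sweeps. After $L$ steps every particle has returned to its starting site, and during one sweep it arrives exactly once at each site, hence meets each active scatterer exactly once. Over two sweeps, i.e.\ $2L$ steps, each particle therefore encounters every active scatterer exactly twice; since a color flip is an involution, an even number of flips restores the original color. Positions are likewise restored because $2L\equiv 0 \pmod{L}$. Consequently the time-$2L$ map acts as the identity on every reachable configuration, and every trajectory is in fact \emph{periodic} with period dividing $2L$, which is strictly stronger than mere eventual periodicity.

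There is essentially no obstacle in this argument; the only point demanding care is making rigorous the informal passage from ``infinite rigidity'' to ``frozen scatterers,'' which is handled by the observation that the counters never return to zero. Once the scatterer field is seen to be constant, both the finite-state argument of Lemma~\ref{lemma1} and the sharper $2L$-periodicity count go through immediately.
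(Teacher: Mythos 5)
Your proposal is correct, and its first two paragraphs are exactly the paper's route: the corollary is obtained by observing that infinite rigidity means no counter ever returns to zero, hence $s_i(t)=s_i(0)$ for all $t$, so the classical Kac ring is the frozen-scatterer special case to which the finite-state pigeonhole argument of Lemma~\ref{lemma1} applies verbatim. Where you genuinely depart from the paper is the third paragraph: instead of stopping at eventual periodicity, you exploit the specific structure of the frozen model --- each particle visits every site exactly once per sweep, so over $2L$ steps it meets each active scatterer exactly twice, and since the color flip is an involution the time-$2L$ map is the identity. This recovers Kac's classical recurrence and is strictly stronger than what Lemma~\ref{lemma1} can give: it shows every trajectory is periodic from time zero (no transient) with an explicit period dividing $2L$, whereas the pigeonhole argument gives no control on either the transient length or the period. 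The trade-off is generality: the parity argument collapses as soon as scatterers can switch (finite $r$), because then the number of flips a particle suffers per sweep is no longer constant, which is precisely why the paper needs the cruder counting argument for the full GKR. One point you handled implicitly but correctly: in the $r\to\infty$ limit the counters themselves grow without bound and so the \emph{full} GKR state never recurs; your reduction to a self-map on $\{-1,0,1\}^{\Lambda_L}$ is legitimate because the counters never feed back into the occupation dynamics once switching is suppressed, and the classical model's state space contains no counters at all.
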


Using Lemma \ref{lemma1}, we can also prove the following result which holds for any observable $O:X_L\rightarrow \mathbb{R}$.
\begin{corollary}
Given $m\in \mathbb{N}$, it holds
\begin{equation}
   \lim_{T \rightarrow \infty} \frac{1}{T} \sum_{t = 1}^{T} O(x(t)) = \frac{1}{m L}\sum_{t=1}^{m L}O(x(t))  \Def\overline{O} \; ,
    \label{O_avg}
\end{equation}
where $m L$ corresponds to the period of a periodic orbit attractor (i.e., the asymptotic periodic orbit).
\end{corollary}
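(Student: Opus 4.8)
The plan is to reduce the statement to the elementary fact that the Cesàro average of an eventually periodic sequence equals its mean over one full period, combined with the observation that $O$ is automatically bounded. Indeed, $X_L = Q^{\Lambda_L}$ is a finite set, so $M \Def \max_{x \in X_L} |O(x)|$ is finite, and this finiteness is the only property of $O$ that the argument uses. First I would invoke Lemma~\ref{lemma1}: the orbit $t \mapsto x(t)$ is eventually periodic, so there exist a transient time $t^* \ge 1$ and a period $P = mL$ with $x(t+P) = x(t)$ for all $t \ge t^*$. That $mL$ can be taken as a period (a multiple of the minimal one) follows from the sweep structure used already in the proof of Lemma~\ref{lemma1}: after $L$ steps every particle has returned to its starting site, so the states sampled at sweep ends lie in a finite set and recur periodically in the number of sweeps.

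Next I would split the Cesàro sum at $t^*$. Writing $T = t^* + kP + s$ with $k = \lfloor (T-t^*)/P \rfloor$ and $0 \le s < P$, periodicity collapses the $k$ complete blocks:
\begin{equation}
\sum_{t=t^*}^{t^*+kP-1} O(x(t)) = k \sum_{t=t^*}^{t^*+P-1} O(x(t)) .
\end{equation}
The initial transient $\sum_{t=1}^{t^*-1}O(x(t))$ and the final incomplete block $\sum_{t=t^*+kP}^{T} O(x(t))$ are bounded in absolute value by $t^* M$ and $(P+1)M$ respectively, both independent of $T$, so each vanishes after division by $T$. Since $kP/T \to 1$ as $T \to \infty$, the surviving term gives
\begin{equation}
\lim_{T \to \infty} \frac{1}{T}\sum_{t=1}^{T} O(x(t)) = \frac{1}{P}\sum_{t=t^*}^{t^*+P-1} O(x(t)) ,
\end{equation}
which is the mean of $O$ over one period of the asymptotic orbit.

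The step I would treat as the main obstacle is identifying this per-period mean with the right-hand side $\frac{1}{mL}\sum_{t=1}^{mL}O(x(t))$ as written. Because $x(t+P)=x(t)$ holds throughout the periodic regime, the average over any block of $P$ consecutive times lying on the attractor is independent of the starting phase; hence the limit equals $\overline{O}$ provided the summation window $\{1,\dots,mL\}$ lies within the periodic regime. I would make this phase-independence explicit, noting in particular that it holds verbatim when the initial configuration is already recurrent (so that $t^*=1$, as is relevant for the attractor basins discussed later). When the transient can exceed $mL$ (cf.\ the Remark, where relaxation times may grow exponentially in the parameters), the same computation shows the Cesàro limit equals the one-period average evaluated on the attractor, and $\overline{O}$ should be read as that quantity rather than as the literal partial sum from $t=1$.
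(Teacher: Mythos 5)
Your proposal is correct and follows essentially the same route as the paper's own proof: both invoke Lemma~\ref{lemma1} and decompose the interval $[1,T]$ into a finite transient, a growing number of complete periods of length $mL$, and one incomplete tail block, with the transient and tail contributions vanishing after division by $T$. Your version simply makes the paper's informal argument rigorous (explicit boundedness of $O$ on the finite set $X_L$ and explicit $O(1)/T$ error bounds), and you correctly flag a caveat the paper glosses over: the right-hand side $\frac{1}{mL}\sum_{t=1}^{mL}O(x(t))$ as literally written only equals the Ces\`aro limit when the window $\{1,\dots,mL\}$ already lies on the attractor, and otherwise $\overline{O}$ must be read as the one-period average evaluated along the asymptotic orbit.
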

\begin{proof}
It follows from Lemma \eqref{lemma1}, because the total number of all states (configurations) of the considered cellular automaton is finite, that the dynamics becomes periodic at the first moment of time when the system comes to (any) configuration, which already appeared before. After that time, dynamics becomes periodic with some period $m L$, with $m\in \mathbb{N}$, when the system comes back to the configuration which appeared twice.
Hence, all equilibrium states (attractors) of our system are periodic with periods $m L$.
Then the relation \eqref{O_avg} holds because if $T\rightarrow \infty$, then the time interval  $[0,T]$ is constituted by an initial (finite) time till our initial configuration gets to a periodic one (which defines our equilibrium state), a huge number of periods (as $T$ tends to infinity) and, possibly, a noncomplete period corresponding to the last part of the interval $[0,T]$ . Therefore, the averages of the observable $O(x(t))$ over time $T$, when $T$ grows to infinity, is equal to average of the same observable restricted to the time period of the corresponding equilibrium state. 
\end{proof}

From Eqs. \eqref{orderpam} and \eqref{Sigma}, we define the triplet $(\overline{\chi}, \overline{\Phi}, \overline{\Sigma})$, which will serve in the following as a set of macroscopic order parameters for the GKR model. Specifically, while $\overline{\chi}$ and $\overline{\Phi}$  monitor, respectively, the average color of the particles and the average state of the scatterers, $\overline{\Sigma}$ yields the average number of a color reversal events per particle occurring at a given time on the ring $\Lambda_L$ in the equilibrium state.

We observe that periodic orbits consisting of configurations in which all ring
sites are occupied by white particles and all scatterers are passive, i.e.,
$o_i = -1$ and $s_i = 1$ for all $i \in \Lambda_L$, are invariant under the
dynamics and reversible (reversibility follows from inverting the
direction of rotation on the ring). Such special periodic orbits, characterized
by the triplet $(\overline{\chi}, \overline{\Phi}, \overline{\Sigma}) =
(-1,1,0)$, are referred to as \emph{frozen states}. Conversely, periodic orbits
with $\overline{\Sigma} > 0$ are called \emph{oscillating states}.

Finally, we remark that the selectivity property is a key feature of the GKR model responsible for the emergence of frozen states. If black and white particles interacted in the same way with the environment, frozen states would not arise, and the phase-space dynamics would instead settle on periodic orbits corresponding exclusively to oscillating states. In particular, numerical simulations confirm that these periodic orbits include the original configurations of the system, as also observed in the classical Kac model.

\section{Single-Particle Dynamics}
\label{s:one} 
\par\noindent
In this section, we focus on the dynamics of the system introduced in Section~\ref{s:model} in the case where the ring contains only a single particle, $N = 1$. Since in this scenario the sites not occupied by scatterers play no role, we restrict our attention to configurations where $s_i(0) \neq 0$ for all $i \in \Lambda_L$. Recall that in the initial state all counters are set to zero.
Without loss of generality, we assume that the particle is initially located at site $0$, i.e., $o_0(0) \neq 0$ and $o_i(0) = 0$ for all $i \in \Lambda_L \setminus \{0\}$.
This implies that, for any time $t>0$, it holds
$S_a(t)=(1-\Phi(t))L/2$ and $S_p(t)=(1+\Phi(t))L/2$. 

We start our investigation considering the special case in which all scatterers are initially in the same state, i.e. either all active or all passive.
First, we remark that the only non-trivial initial configuration is the one with $o_0(0) = 1$ and $s_i(0) = -1$ for all $i \in \Lambda_L$. Other cases reduce to either trivial or equivalent dynamics:

\begin{itemize}
    \item If $o_0(0) = -1$ and $s_i(0) = 1$ for all $i \in \Lambda_L$, the system is in a \emph{frozen} state from the outset.
    
    \item If $o_0(0) = -1$ and $s_i(0) = -1$ for all $i \in \Lambda_L$, the system becomes equivalent to the non-trivial case after one time step: a black particle appears at site $1$, all scatterers are active, and all counters are reset to zero.
    
    \item If $o_0(0) = 1$ and $s_i(0) = 1$ for all $i \in \Lambda_L$, the system reaches the non-trivial configuration (with $o_0 = 1$ and $s_i = -1$ for all $i$ and all counters zero) after $r$ full sweeps, i.e., after $rL$ time steps.
\end{itemize}

Extensive numerical simulations reveal that the system reaches different periodic orbits depending on the values of the parameters $r$ and $L$. The results are summarized in Table~\ref{tab:table1}, which reports the values of the triplet $(\overline{\chi}, \overline{\Phi}, \overline{\Sigma})$ for $r = 1, \dots, 5$ and $L = 1, \dots, 10$.

\begin{table*}
\begin{tabular}{c|c|c|c|c|c}
       & 1    &  2   & 3    & 4    & 5   \\
\hline
 1  & $(-1.000,1.000,0.000)$ &  $(-1.000,1.000,0.000)$ &  $(-1.000,1.000,0.000)$ &  $(-1.000,1.000,0.000)$ &  $(-1.000,1.000,0.000)$ \\
 2  & $(0.333,-0.333,0.667)$ &  $( 0.143,-0.143,0.571)$ &  $(0.091,-0.091,0.545)$ &  $(0.067,-0.067,0.533)$ &  $(0.053,-0.053,0.526)$ \\
 3  & $(0.143,-0.143,0.571)$ &  $(-1.000,1.000,0.000)$ &  $(-1.000,1.000,0.000)$ &  $(-1.000,1.000,0.000)$ &  $(-1.000,1.000,0.000)$  \\
 4 & $(0.067,-0.067,0.533)$ &  $(-1.000,1.000,0.000)$ &  $(-1.000,1.000,0.000)$ &  $(-1.000,1.000,0.000)$ &  $(-1.000,1.000,0.000)$ \\
 5 & $(-0.048,0.048,0.476)$ &  $(-0.017,0.018,0.491)$ &  $(-1.000,1.000,0.000)$ &  $(-1.000,1.000,0.000)$ &  $(-1.000,1.000,0.000)$  \\
 6 & $(0.016,-0.016,0.508)$ &  $(-1.000,1.000,0.000)$ &  $(-1.000,1.000,0.000)$ &  $(-1.000,1.000,0.000)$ &  $(-1.000,1.000,0.000)$  \\
 7 & $(0.008,-0.008,0.504)$ &  $(0.033,-0.033,0.517)$ &  $(-1.000,1.000,0.000)$ &  $(-1.000,1.000,0.000)$ &  $(-1.000,1.000,0.000)$  \\
 8 & $(-0.175,0.175, 0.413)$ &  $(0.020,-0.020,0.510)$ &  $(-1.000,1.000,0.000)$ &  $(-1.000,1.000,0.000)$ &  $(-1.000,1.000,0.000)$   \\
 9 & $(-0.233,0.228,0.384)$ &  $(-0.004,0.004,0.498)$ &  $(0.003,-0.003,0.501)$ &  $(-1.000,1.000,0.000)$ &  $(-1.000,1.000,0.000)$ \\
 10 & $(-0.001,0.001,0.499)$ &  $(-0.004,0.004,0.498)$ &  $(-1.000,1.000,0.000)$ &  $(-1.000,1.000,0.000)$ &  $(-1.000,1.000,0.000)$ \\
\end{tabular}
\caption{Values of the triplet $(\overline{\chi}, \overline{\Phi}, \overline{\Sigma})$ for $N=1$ and for different values of the rigidity $r$ (horizontal axis) and lattice length $L$ (vertical axis) obtained from numerical simulations of the GKR model. The initial configuration is $o_0(0) = 1$ and $s_i(0) = -1$ for all $i \in \Lambda_L$, corresponding to $\chi(0)=1,\Phi(0)=-1$. Simulations are run over a time interval of $T=10^6$ sweeps, which suffice to reach the attractor for each value of $L,r$ in the table. Numerical values in the table are rounded to the third decimal digit.}
\label{tab:table1}
\end{table*}

\begin{figure}
     \centering
         \includegraphics[width=0.45\textwidth]{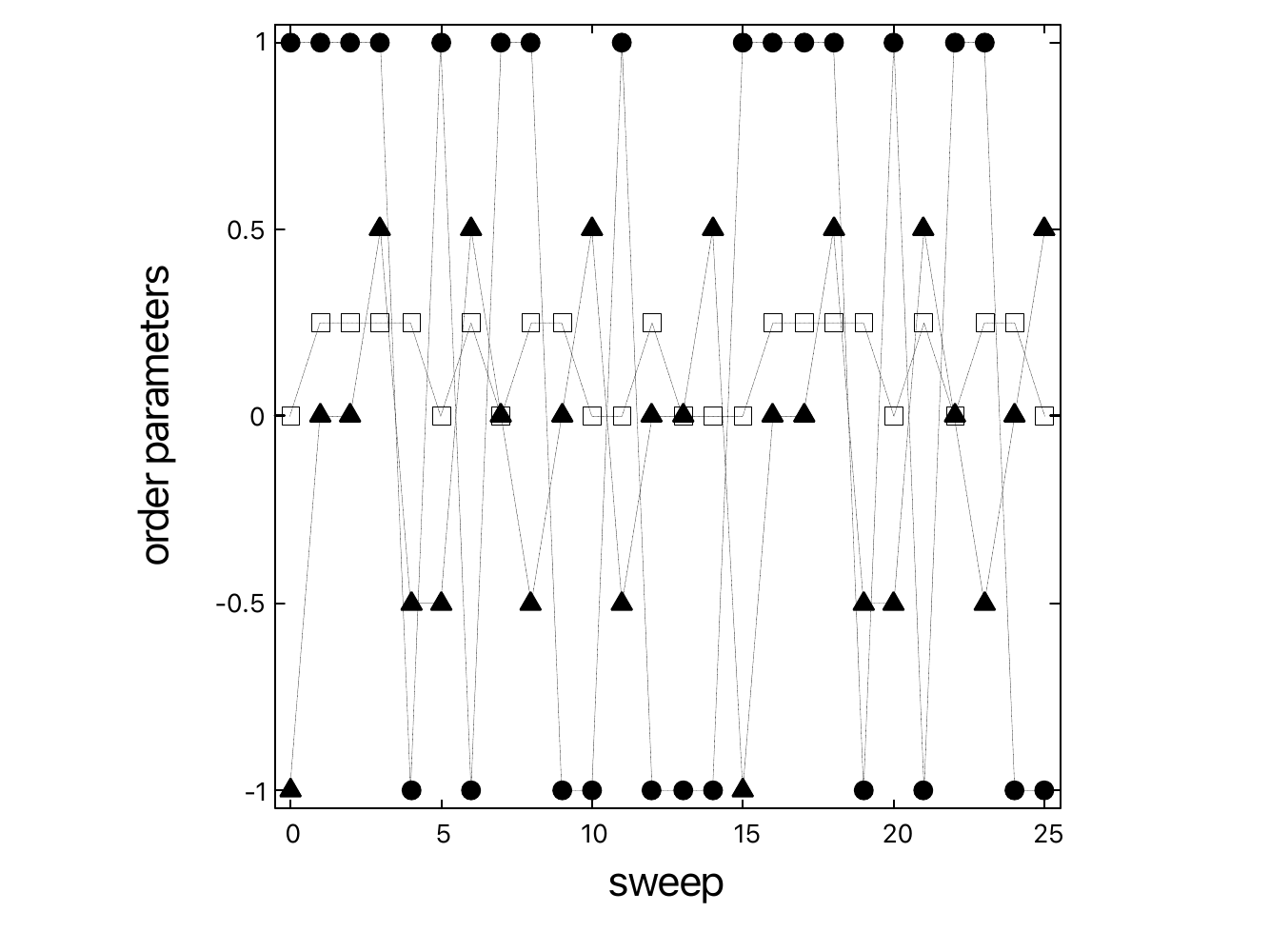}
         \includegraphics[width=0.45\textwidth]{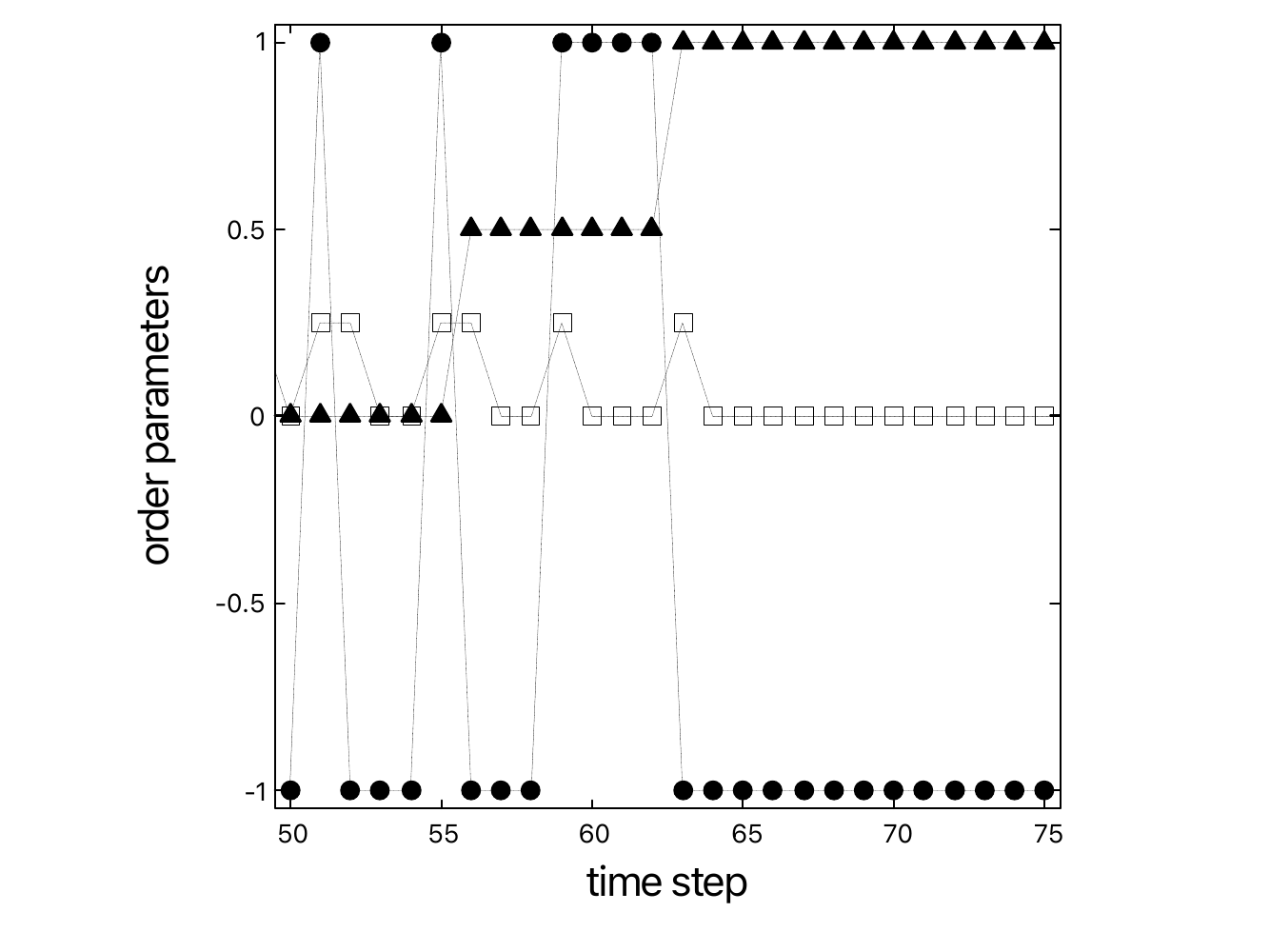}
        \caption{Behavior of $\chi(t)$ (filled disks), $\Phi(t)$ (filled triangles) and $\Sigma(t)$ (empty squares) for the GKR model with $L=4$ and $r=1$ (upper panel) and $r=2$ (lower panel). The initial configuration is the same considered in Table \ref{tab:table1}.} 
        \label{fig:L4}
\end{figure}

The behavior of $\chi(t)$, $\Phi(t)$ and $\Sigma(t)$ for $L=4$ and $r=1,2$ is portrayed in Fig. \ref{fig:L4}. The trajectory corresponding to $r=1$ (upper panel) settles into a periodic orbit of period equal to 15 sweeps, which also includes the initial configuration. Conversely, the trajectory corresponding to $r=2$ (lower panel) ends on a periodic orbit corresponding to a frozen state.
Furthermore, the cases corresponding to $L=1,2$ are amenable to a direct analytical evaluation, which is considered in the following Lemma.

\begin{lemma}
\label{t:one}
Consider the cellular automaton of Section~\ref{s:model} 
with initial state 
$o_0(0)=1$,
$o_i(0)=0$ for all $i\in\Lambda_L\setminus\{0\}$,
$s_i(0)=-1$ 
and
$c_i(0)=0$ for all $i\in\Lambda_L$. 
Then it holds:
\begin{enumerate}
\item
if $L=1$
the frozen state 
$o_0=-1$ 
and 
$s_0=1$ with $c_0=0$ is reached at time $2r-1$. \label{t:one-01}
\item
If $L=2$, an oscillating state with period $8r-2$ is reached at time $2(r-1)$ \label{t:one-02}
\end{enumerate}
\end{lemma}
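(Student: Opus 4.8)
The plan is to prove both parts by directly iterating the three update rules of Section~\ref{s:model} and tracking the triple $(o_i,s_i,c_i)$ at each site in time. The essential simplification is that a single particle on $L$ sites occupies site $t\bmod L$ at time $t$, so its trajectory is trivial and the entire content lies in the coupled evolution of the counters, the scatterer states, and the particle's color. Two facts will be used repeatedly: the counter at the site just entered advances exactly when the arriving particle is black, and the scatterer there flips exactly when such an advance makes the counter wrap from $r-1$ to $0$.

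For part~\ref{t:one-01} the lone particle returns to site $0$ at every step and meets its scatterer each time. While the scatterer stays active the color flips at every step, so the particle is black at even times and white at odd times; hence the counter advances precisely at the odd times $t=1,3,\dots$, reaching the value $k$ after its $k$-th advance at $t=2k-1$. At the $r$-th advance, $t=2r-1$, the counter wraps to $0$ and the scatterer turns passive, while the simultaneous color flip (the scatterer being still active at that instant) sends the black particle to white. This yields $o_0=-1$, $s_0=1$, $c_0=0$, the invariant frozen state, which proves the claim together with the time $2r-1$.

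For part~\ref{t:one-02} the particle is at site $0$ at even times and at site $1$ at odd times, and I would decompose the orbit into phases labelled by the pair $(s_0,s_1)$. In the opening phase both scatterers are active, the color flips every step (black at even times), only $c_1$ advances (at the odd times), and it wraps at $t=2r-1$, turning $s_1$ passive. I would then single out the configuration at $t=2(r-1)$, namely $(s_0,s_1)=(-1,-1)$ with $c_0=0$, $c_1=r-1$ and a black particle at site $0$, and verify it occurs at no earlier time, so the transient length is exactly $2(r-1)$. In the next phase ($s_1$ passive, $s_0$ active) the particle flips only at site $0$, so its color is $4$-periodic and \emph{both} counters advance, each once every four steps and with their advances offset by one step; $c_1$ completes its $r$ increments and wraps at $t=6r-3$ (restoring $s_1$ to active), by which time $c_0$ has climbed to $r-1$, so that one step later, at $t=6r-2$, the incoming black particle wraps $c_0$ and turns $s_0$ passive. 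A direct check shows that the configuration at $t=6r-2$ equals the image under the site swap $\tau:0\leftrightarrow1$ of the configuration at $t=2r-1$. Since the automaton is translation invariant, this single coincidence propagates as $x(6r-2+j)=\tau\,x(2r-1+j)$ for all $j\ge0$; evaluating at $j=4r-2$ gives $x(10r-4)=\tau\,x(6r-3)$, which is exactly the orbit state identified at $t=2(r-1)$, so the trajectory closes with period $(10r-4)-2(r-1)=8r-2$. As color reversals keep occurring along the orbit (e.g.\ at site $0$ throughout the passive-$s_1$ phase), $\overline{\Sigma}>0$ and the orbit is oscillating rather than frozen.

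The step I expect to be the main obstacle is the two-counter bookkeeping in the $L=2$ phase with one passive scatterer: one must show, for every $r$, that the particle's color is genuinely $4$-periodic once a scatterer becomes passive, that the two counters advance exactly every four steps with the correct one-step offset, and---most delicately---that $c_0$ is ``preloaded'' to the value $r-1$ at the very instant $s_1$ flips back, so that $s_0$ flips one step later and the swap coincidence at $t=6r-2$ holds exactly. Once these increment patterns and flip times are pinned down, the translation symmetry collapses the second half of the period onto the first, and the period $8r-2$ follows by comparing two explicit configurations.
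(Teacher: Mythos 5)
Your proposal is correct and takes essentially the same approach as the paper's proof: direct iteration of the update rules, a decomposition into phases delimited by the scatterer flips, and the use of the site-swap (translation) symmetry to close the orbit after half a period, yielding $x(10r-4)=x(2(r-1))$ and the period $8r-2$. The only differences are presentational: the paper works through separate explicit tables for $r=1$, $r=2$ and $r\ge 3$, whereas your increment-pattern bookkeeping treats all $r\ge 1$ uniformly, and you anchor the swap coincidence at $x(6r-2)=\tau\,x(2r-1)$ rather than at the paper's $x(6r-3)=\tau\,x(2(r-1))$.
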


\par\noindent
\begin{proof}
\ref{t:one-01}.\
Case with $L=1$. If $r=1$, reporting time in the first column and the corresponding 
state of the variables in the remaining ones, we have 
\begin{displaymath}
\begin{array}{c|ccc}
&o_0&c_0&s_0\\
\hline
0&1&0&-1\\
1&-1&0&1\\
\end{array}
\end{displaymath}
which is the invariant state.
If $r\ge2$ we have 
\begin{displaymath}
\begin{array}{c|ccc}
&o_0&c_0&s_0\\
\hline
0&1&0&-1\\
1&-1&1&-1\\
2&1&1&-1\\
\vdots&&&\\
2(r-1)&1&r-1&-1\\
2r-1&-1&0&1\\
\end{array}
\end{displaymath}
which, again, is the invariant state.

\ref{t:one-02}.\
Case with $L=2$.
We start with $r=1$. In this case 
we have the following trajectory: 
\begin{displaymath}
\begin{array}{c|cccccc}
&o_0&c_0&s_0&o_1&c_1&s_1\\
\hline
0&1&0&-1&0&0&-1\\
1&0&0&-1&-1&0&1\\
2&1&0&-1&0&0&1\\
3&0&0&-1&1&0&-1\\
\end{array}
\end{displaymath}
The configuration 
at time $3$ is nothing but 
that at time $0$ with the two sites exchanged. 
Thus, the configuration at time $6$ is equal to that at time $0$.
Hence the trajectory is periodic with period $6$.

We construct, now the trajectory for the case $r=2$. We have 
the following:
\begin{displaymath}
\begin{array}{c|cccccc}
&o_0&c_0&s_0&o_1&c_1&s_1\\
\hline
0&1&0&-1&0&0&-1\\
1&0&0&-1&-1&1&-1\\
2&1&0&-1&0&1&-1\\
3&0&0&-1&-1&0&1\\
4&1&0&-1&0&0&1\\
5&0&0&-1&1&1&1\\
6&-1&1&-1&0&1&1\\
7&0&1&-1&-1&1&1\\
8&1&1&-1&0&1&1\\
9&0&1&-1&1&0&-1\\
\end{array}
\end{displaymath}
The configuration at time $9$ is obtained by that at time by 
exchanging the two sites, thus at time $16$ the configuration 
of time $2$ will be reached. Thus starting from time $2$ the trajectory 
is periodic with period $14$.

Finally, for the case $r\ge3$, in the first part of the trajectory 
the state of the site at $0$ is repeated cyclically. Indeed, we have
\begin{displaymath}
\begin{array}{c|cccccc}
&o_0&c_0&s_0&o_1&c_1&s_1\\
\hline
0&1&0&-1&0&0&-1\\
1&0&0&-1&-1&1&-1\\
2&1&0&-1&0&1&-1\\
3&0&0&-1&-1&2&-1\\
4&1&0&-1&0&2&-1\\
\vdots&&&&&&\\
2(r-1)&1&0&-1&0&r-1&-1\\
\end{array}
\end{displaymath}
In the second part, on the other hand, 
both the sites are changed as follows
\begin{displaymath}
\begin{array}{c|cccccc}
&o_0&c_0&s_0&o_1&c_1&s_1\\
\hline
2r-1&0&0&-1&-1&0&1\\
2r&1&0&-1&0&0&1\\
2r+1&0&0&-1&1&1&1\\
2r+2&-1&1&-1&0&1&1\\
2r+3&0&1&-1&-1&1&1\\
2r+4&1&1&-1&0&1&1\\
\vdots&&&&&&\\
2r+4(r-1)&1&r-1&-1&0&r-1&1\\
6r-3&0&r-1&-1&1&0&-1\\
\end{array}
\end{displaymath}
and at time $6r-3$ the configuration obtained by exchanging the two site 
in the configuration of time $2(r-1)$ is obtained. 
Thus, at time 
$6r-3+[6r-3-2(r-1)]=10r-4$
the same configuration as the one at time $2(r-1)$ is reached. 
Hence, starting from time $2(r-1)$ the trajectory is periodic and the 
period is $10r-4-2(r-1)=8r-2$. 
\end{proof}

To further highlight the presence of multiple attractors in the case $N=1$, we
performed an extensive set of numerical simulations varying the rigidity $r$ and
the ring length $L$. The results are summarized in Table \ref{tab:table1}. In
particular, the first two rows, corresponding to $L=1$ and $L=2$, are
consistent with Lemma~\ref{t:one}, whereas for larger values
$L \in \{1,\dots,10\}$ both oscillating and frozen states are observed as $r$
varies over $\{1,\dots,5\}$. While frozen states tend to dominate at higher
rigidity, the time required to reach the attractor generally increases with $L$
and $r$.

\begin{table*}
\begin{tabular}{c|c|c|c|c|c}
& 1    &  2   & 3    & 4    & 5   \\
\hline
  1  & $(-1.000,1.000,0.000)$ &  $(-1.000,1.000,0.000)$ &  $(-1.000,1.000,0.000)$ &  $(-1.000,1.000,0.000)$ &  $(-1.000,1.000,0.000)$ \\
  2  & $(-1.000,1.000,0.000)$ &  $(-1.000,1.000,0.000)$ &  $(-1.000,1.000,0.000)$ &  $(-1.000,1.000,0.000)$ &  $(-1.000,1.000,0.000)$ \\
 3  & $(-0.040,-0.040,0.480)$ &  $(0.000,0.000,0.500)$ &  $(-0.040,0.040,0.480)$ &  $(-0.059,0.050,0.471)$ &  $(-0.070,0.070,0.466)$  \\
 4 & $(0.067, -0.067,  0.533)$ &  $(-1.000,1.000,0.000)$ &  $(-1.000,1.000,0.000)$ &  $(-1.000,1.000,0.000)$ &  $(-1.000,1.000,0.000)$ \\
 5 & $(0.143, -0.141, 0.571)$ &  $(-0.018, 0.018, 0.491)$ &  $(-0.089, 0.089, 0.456)$ &  $(-0.119, 0.119, 0.440)$ &  $(-0.138, 0.1378, 0.432)$  \\
 6 & $(0.016,-0.016,0.508)$ &  $(-1.000,1.000,0.000)$ &  $(-1.000,1.000,0.000)$ &  $(-1.000,1.000,0.0000$ &  $(-1.000,1.000,0.000)$ \\
  7 & $(0.008, -0.008,  0.504)$ &  $(0.057, -0.057, 0.528)$ &  $(-0.0423 0.043, 0.479)$ &  $(-1.000,  1.000,  0.000)$ &  $(-1.000,  1.000,  0.000)$  \\
  8 & $(0.016, -0.016,  0.508)$ &  $(-0.010,  0.010,  0.495)$ &  $(-1.000,1.000,0.000)$ &  $(-1.000,1.000,0.000)$ &  $(-1.000,1.000,0.000)$ \\
  9 & $(-0.014,  0.014,  0.493)$ &  $(-0.004,  0.004,  0.498)$ &  $(-1.000,1.000,0.000)$ &  $(-1.000,1.000,0.000)$ &  $(-1.000,1.000,0.000)$ \\
  10 & $(-0.001,  0.001,  0.499)$ &  $(-0.004,  0.004,  0.498)$ &  $(-1.000,1.000,0.000)$ &  $(-1.000,1.000,0.000)$ &  $(-1.000,1.000,0.000)$ \\
\end{tabular}
\caption{Values of the triplet $(\overline{\chi}, \overline{\Phi}, \overline{\Sigma})$ for $N=1$ and for different values of the rigidity $r$ (horizontal axis) and lattice length $L$ (vertical axis) obtained from numerical simulations. The initial configuration is $o_0(0) = 1$, $o_i(0) = 0$ for all $i \in \Lambda_L\setminus\{0\}$, and $s_0(0) = 1$, $s_i(0) = -1$ for all $i \in \Lambda_L\setminus \{0\}$. Simulations are run over a time interval of $T=10^6$ sweeps, which suffice to reach the attractor for each value of $L,r$ in the table. Numerical values are rounded to the third decimal digit.}
\label{tab:table2}
\end{table*}

\begin{table*}
\begin{tabular}{c|c|c|c|c|c}
& 1    &  2   & 3    & 4    & 5   \\
\hline
  3  & $(-1.000,1.000,0.000)$ &  $(-1.000,1.000,0.000)$ &  $(-1.000,1.000,0.000)$ &  $(-1.000,1.000,0.000)$ &  $(-1.000,1.000,0.000)$ \\
   4 & $(0.067, -0.067,  0.533)$ & $(-1.000,1.000,0.000)$ &  $(-1.000,1.000,0.000)$ &  $(-1.000,1.000,0.000)$ &  $(-1.000,1.000,0.000)$ \\
 5 & $(-0.048, 0.048, 0.476)$ &  $(-0.130, 0.129, 0.435)$ &  $(-0.155,  0.153,  0.422)$ &  $(-1.000,  1.000,  0.000)$ &  $(-0.174,  0.172,  0.413)$  \\
 6 & $(0.016, -0.016,  0.508)$ &  $(-1.000,1.000,0.000)$ &  $(-1.000,1.000,0.000)$ &  $(-1.000,1.000,0.000)$ &  $(-1.000,1.000,0.000)$ \\
  7 & $(0.008, -0.008,  0.504)$ &  $(-0.005,  0.005,  0.498)$ &  $(0.000,  0.000,  0.500)$ &  $(-1.000,1.000,0.000)$ &  $(-1.000,1.000,0.000)$ \\
  8 & $(0.079, -0.079, 0.540)$ &  $(-0.010,  0.010,  0.495)$ &  $(-1.000,1.000,0.000)$ &  $(-1.000,1.000,0.000)$ &  $(-1.000,1.000,0.000)$ \\
  9 & $(-0.014,  0.014,  0.493)$ &  $(-0.004,  0.004,  0.498)$ &  $(-1.000,1.000,0.000)$ &  $(-1.000,1.000,0.000)$ &  $(-1.000,1.000,0.000)$ \\
  10 & $(0.008, -0.008, 0.504)$ &  $(-0.00, 0.0044, 0.498)$ &  $(-1.000,1.000,0.000)$ &  $(-1.000,1.000,0.000$ &  $(-1.000,1.000,0.000)$ \\
\end{tabular}
\caption{Values of the triplet $(\overline{\chi}, \overline{\Phi}, \overline{\Sigma})$ for $N=1$ and for different values of the rigidity $r$ (horizontal axis) and lattice length $L$ (vertical axis) obtained from numerical simulations. The initial configuration is $o_0(0) = 1$, $o_i(0) = 0$ for all $i \in \Lambda_L\setminus\{0\}$, and $s_0(0) = s_2(0)= 1$, $s_i(0) = -1$ for all $i \in \Lambda_L\setminus \{0,2\}$. Simulations are run over a time interval of $T=10^6$ sweeps, which suffice to reach the attractor for each value of $L,r$ in the table. Numerical values are rounded to the third decimal digit.}
\label{tab:table3}
\end{table*}

We also tested the sensitivty of the GKR model to the initial configuration of the
scatterers. For instance, Table \ref{tab:table2} reports the triplet
$(\overline{\chi}, \overline{\Phi}, \overline{\Sigma})$ for the case
$s_0(0) = 1$ and $s_i(0) = -1$ for $i \in \Lambda_L \setminus \{0\}$. In particular,
when $L = 2$ and $r \in \{1, \dots, 5\}$, the system exhibits frozen states,
in contrast to the results of Lemma \ref{t:one} shown in the second row of
Table \ref{tab:table1}, revealing that only oscillating states instead emerge when
$s_i(0) = -1$ for all $i \in \Lambda_L$.
Analogously, Table~\ref{tab:table3} reports the values of the triplet for the
case $s_0(0) = s_2(0) = 1$ and $s_i(0) = -1$ for
$i \in \Lambda_L \setminus \{0,2\}$. In this setting, a mixture of oscillating
and frozen states is observed for all considered values of
$L \in \{1,\dots,5\}$ and $r \in \{1,\dots,5\}$.

Furthermore, we altered the direction of particle motion (from clockwise to
anticlockwise) to probe the time reversibility of the dynamics on an attractor.
For frozen states, where the rigidity of the scatterers is irrelevant, reversing
the rotation direction allows the system to retrace exactly all previous
configurations of the periodic orbit. This is not the case for oscillating
states, where reversing the rotation typically drives the dynamics toward a
different periodic orbit than the one reached for the same pair $(L,r)$ under
clockwise dynamics (see Table \ref{tab:table1}).

To gain further insight, we performed the numerical experiment illustrated in
Fig. \ref{fig:rev}, considering the GKR model with $L=5$, $r=1$. We started from
the initial configuration $o_0(0) = 1$ and $s_i(0) = -1$ for all
$i \in \Lambda_L$, which already belongs to a periodic orbit corresponding to
an oscillating state with period $21$ sweeps. We then ran the automaton until
the starting configuration was reached again at time $t_* = 105$, thereby
recording an entire cycle of the periodic orbit
$\{x(t)\}_{t=0}^{t_*}$. From this cycle we reconstructed the corresponding
time-reversed dynamics, given by the backward sequence $\{x(t_* - t)\}_{t=0}^{t_*}$.

Next, we used the same initial configuration as the starting state for
anticlockwise dynamics on the ring. In this case, the initial configuration also
belongs to a periodic orbit of the anticlockwise dynamics, again an oscillating
state with the same period but distinct from the one reached under clockwise
dynamics. The comparison between the time-reversed and anticlockwise dynamics is
shown in Fig.~\ref{fig:rev}, and reveals that the behavior of $\chi(t)$ differs
between the two evolutions. However, in the case considered in
Fig.~\ref{fig:rev}, the triplet $(\overline{\chi}, \overline{\Phi},
\overline{\Sigma})$ coincided (up to the third decimal digit) for clockwise and
anticlockwise dynamics. Table~\ref{tab:table4} summarizes the results of this
numerical experiment for different pairs $(L,r)$. Boxed entries highlight the
cases where the triplet $(\overline{\chi}, \overline{\Phi}, \overline{\Sigma})$
changes when switching from clockwise dynamics (see Table \ref{tab:table1}) to
anticlockwise particle dynamics on the ring.

\begin{figure}
     \centering
         \includegraphics[width=0.45\textwidth]{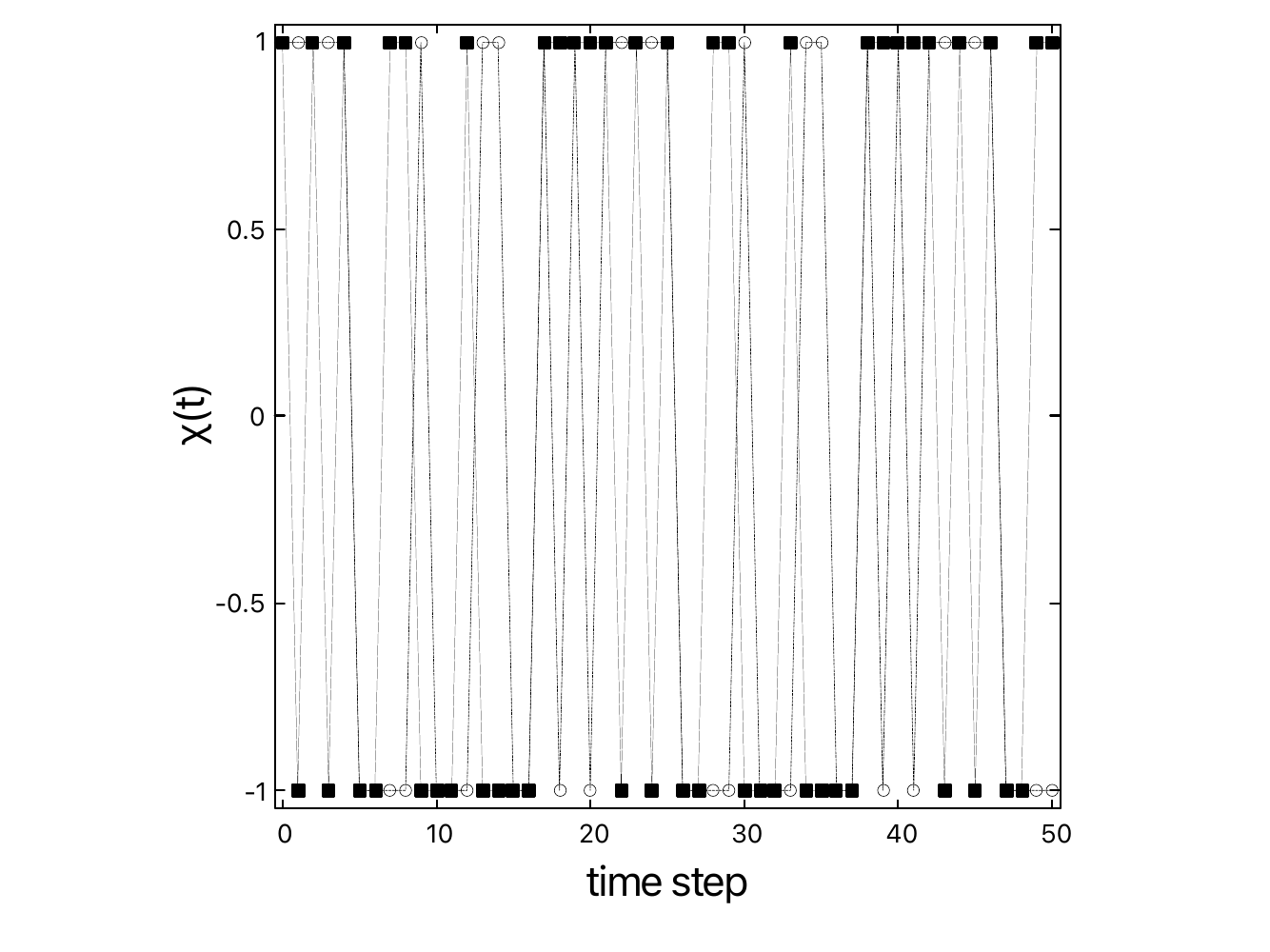}
        \caption{Behavior of $\chi(t)$ for the GKR model with $L=5$ and $r=1$ for time-reversed (empty disks) and anticlockwise dynamics (filled squares). In both evolutions the initial configuration, which belongs to a periodic orbit, is $o_0(0) = 1$, $s_i(0) = -1$ for all $i \in \Lambda_L$.} 
        \label{fig:rev}
\end{figure}


\begin{table*}
%
\begin{tabular}{c|c|c|c|c|c}
 & 1    &  2   & 3    & 4    & 5   \\
 \hline

  1  & $(-1.000,1.000,0.000)$ &  $(-1.000,1.000,0.000)$ &  $(-1.000,1.000,0.000)$ &  $(-1.000,1.000,0.000)$ &  $(-1.000,1.000,0.000$ \\
  2  & $(0.333,-0.333,0.667)$ &  $( 0.143,-0.143,0.571)$ &  $(0.091,-0.091,0.545)$ &  $(0.067,-0.067,0.533)$ &  $(0.053,-0.053,0.526)$ \\
 3  & $(0.143,-0.143,0.571)$ &  $(-1.000,1.000,0.000)$ &  $(-1.000,1.000,0.000)$ &  $(-1.000,1.000,0.000)$ &  $(-1.000,1.000,0.000)$  \\
   4 & $(0.067,-0.067,0.533)$ &  $(-1.000,1.000,0.000)$ &  $(-1.000,1.000,0.000)$ &  $(-1.000,1.000,0.000)$ &  $(-1.000,1.000,0.000)$ \\
 5 & $(-0.048,0.048,0.476)$ &  $(-0.018,0.018,0.491)$ &  $(-1.000,1.000,0.000)$ &  $(-1.000,1.000,0.000)$ &  $(-1.000,1.000,0.000)$  \\
 6 & $(0.016,-0.016,0.508)$ &  $(-1.000,1.000,0.000)$ &  $(-1.000,1.000,0.000)$ &  $(-1.000,1.000,0.000)$ &  $(-1.000,1.000,0.000)$  \\
  7 & $(0.008,-0.008,0.504)$ &  $\fbox{(-0.129,0.129,0.436)}$ &  $(-1.000,1.000,0.000)$ &  $(-1.000,1.000,0.000)$ &  $(-1.000,1.000,0.000)$  \\
  8 & $(-0.175,0.175, 0.413)$ &  $\fbox{(-0.010,0.010,0.495)}$ &  $(-1.000,1.000,0.000)$ &  $(-1.000,1.000,0.000)$ &  $(-1.000,1.000,0.000)$   \\
  9 & $(-0.233,0.228,0.384)$ &  $\fbox{(0.096,-0.097,0.548)}$ &  $\fbox{(0.016,0.097,0.451)}$ &  $(-1.000,1.000,0.000)$ &  $(-1.000,1.000,0.000)$ \\
  10 & $(-0.001,0.001,0.49)$ &  $(-0.0044,0.004,0.498)$ &  $(-1.000,1.000,0.000)$ &  $(-1.000,1.000,0.000)$ &  $(-1.000,1.000,0.000)$ \\
\end{tabular}
\caption{Values of the triplet $(\overline{\chi}, \overline{\Phi}, \overline{\Sigma})$
for $N=1$ and different values of the rigidity $r$ (horizontal axis) and lattice length $L$ (vertical axis),
obtained from numerical simulations of the GKR model with anticlockwise particle
dynamics on the ring. For each pair $(L,r)$, the initial configuration was taken
from the corresponding periodic orbit reached after $10^6$ sweeps with clockwise
dynamics. Numerical values are rounded to the third decimal digit. Boxed
entries highlight the cases in which the triplet differs from the values
obtained under clockwise evolution, reported in Table~\ref{tab:table1}.
}
\label{tab:table4}
\end{table*}

\begin{table*}
\begin{tabular}{c|c|c|c|c|c}
 & 1    &  2   & 3    & 4    & 5   \\
 \hline
   2 & $(-1.000,  1.000,  0.000)$ &  $(-1.000,  1.000,  0.000)$ &  $(-1.000,  1.000,  0.000)$ &  $(-1.000,  1.000,  0.000)$ &  $(-1.000,  1.000,  0.000)$ \\
   3 & $(0.143, -0.143,  0.571)$ &  $(0.143, -0.143,  0.571)$ &  $(0.059, -0.098,  0.529)$ &  $(-1.000,  1.000,  0.000)$ &  $(0.035, -0.057,  0.517)$ \\
   4 & $(0.000, -0.167,  0.500)$ &  $(0.067, -0.067,  0.533)$ &   $(-1.000,  1.000,  0.000)$ &  $(-1.000,  1.000,  0.000)$ &  $(-1.000,  1.000,  0.000)$ \\
 5 & $(0.032, -0.032,  0.516)$ &  $(0.032, -0.032,  0.516)$ &  $(-1.000,  1.000,  0.000)$ &  $(-0.018,  0.018,  0.491)$ &  $(-1.000,  1.000,  0.000)$ \\
 6 & $(-0.143, -0.048  0.429)$ &  $(-1.000,  1.000,  0.000)$ &  $(-0.034,  0.034  0.483)$ &  $(-1.000,  1.000,  0.000)$ &  $(-1.000,  1.000,  0.000)$ \\
  7 & $(-0.011,  0.011,  0.495)$ &  $( 0.040, -0.040,  0.520)$ &  $(0.003, -0.003,  0.501)$ &  $(-1.000,  1.000,  0.000)$ &  $(-1.000,  1.000,  0.000)$  \\
  8 & $(-0.200,  0.000,  0.400)$ &  $(0.020, -0.020,  0.255)$ &  $(-1.000,  1.000,  0.000)$ &  $(-1.000,  1.000,  0.000)$ &  $(-1.000,  1.000,  0.000)$ \\
  9 & $(-0.002,  0.002,  0.499)$ &  $(0.01, -0.0091,  0.508)$ &  $(-1.000,  1.000,  0.000)$ &  $(-1.000,  1.000,  0.000)$ &  $(-1.000,  1.000,  0.000)$ \\
  10 & $(-0.286,  0.094,  0.357)$ &  $(-0.018,  0.018,  0.491)$ &  $(-1.000,  1.000,  0.000)$ &  $(-1.000,  1.000,  0.000)$ &  $(-1.000,  1.000,  0.000)$ \\
\end{tabular}
\caption{Values of the triplet $(\overline{\chi}, \overline{\Phi}, \overline{\Sigma})$ for $N=2$ and for different values of the rigidity $r$ (horizontal axis) and lattice length $L$ (vertical axis) obtained from numerical simulations. The initial configuration is $o_i(0) = o_1(0) = 1$, $o_i(0)=0$ for all $i \in \Lambda_L\setminus\{0,1\}$, $s_i(0) = -1$ for all $i \in \Lambda_L$. Simulations are run over a time interval of $T=10^6$ sweeps, which suffice to reach the attractor for each value of $L,r$ in the table. Numerical values are rounded to the third decimal digit. Particles move clockwise on the ring.}
\label{tab:table5}
\end{table*}

\section{Two-particle case}
\label{s:two} 
\par\noindent

In this section we fix $N=2$ to probe the effect of particle interactions mediated by the scatterers.
The results of our numerical analysis for $N=2$, $L \in {2,\dots,5}$, and $r \in {1,\dots,5}$ are reported in Tab.~\ref{tab:table5}, showing the emergence of new periodic orbits (see Tab.~\ref{tab:table1} for comparison).
In particular, unlike the case $L=2$ and $N=1$ discussed in case~\ref{t:one-02} of Lemma~\ref{t:one}, for $L=2$ and $N=2$ particle interactions produce periodic orbits that consist entirely of frozen states, as established in the following Lemma.
\begin{lemma}
\label{t:two}
Consider the cellular automaton of Section~\ref{s:model} with $L=2$, $N=2$, and
initial state 
$o_0(0)=o_1(0)=1$, 
$s_0(0)=s_1(0)=-1$, 
and $c_0(0)=c_1(0)=0$. 
Then, for any $r \ge 1$, a frozen state is reached at time $2r-1$.
\end{lemma}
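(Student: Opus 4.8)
The plan is to exploit the exchange symmetry of the initial data rather than tabulating the whole trajectory. The starting configuration $o_0(0)=o_1(0)=1$, $s_0(0)=s_1(0)=-1$, $c_0(0)=c_1(0)=0$ is invariant under swapping the two sites, and my first step would be to show by induction that this symmetry is preserved at every time step, i.e. $o_0(t)=o_1(t)$, $c_0(t)=c_1(t)$, and $s_0(t)=s_1(t)$ for all $t\ge 0$. This is immediate from the update rules: with $L=2$, site $0$ receives the particle from site $1$ and site $1$ from site $0$, while each counter and scatterer update uses only the incoming occupation $o_{i-1}(t-1)$ together with the local pair $(c_i,s_i)$. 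Hence, if at some time the two sites carry identical triples, both receive the same incoming color and share the same local data, so both produce the same updated triple.

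The second, and really decisive, step is to notice what this symmetry buys us. Under $o_0=o_1$, $c_0=c_1$, $s_0=s_1$, the incoming occupation $o_{i-1}(t-1)$ appearing in all three update rules may be replaced by the local occupation $o_i(t-1)$. But this is exactly the situation of the single-site ($L=1$) system, in which the lone particle returns to its own site at every step and therefore scatters off its own scatterer. Consequently the common triple $(o_0(t),c_0(t),s_0(t))$ evolves under precisely the $L=1$, $N=1$ update map, starting from $o_0(0)=1$, $s_0(0)=-1$, $c_0(0)=0$.

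At this point I would simply invoke part~\ref{t:one-01} of Lemma~\ref{t:one}: the single-site system with this initial datum reaches the frozen state $o_0=-1$, $s_0=1$, $c_0=0$ at time $2r-1$. Pulling this back through the exchange symmetry yields $o_0(2r-1)=o_1(2r-1)=-1$ and $s_0(2r-1)=s_1(2r-1)=1$, which is the frozen state for the two-site, two-particle system, attained at time $2r-1$ as claimed.

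I expect the only place needing care is the induction in the first step, specifically checking that the counter update $c_i(t)=[c_i(t-1)+\delta_{o_{i-1}(t-1),1}]\bmod r$ and the scatterer update $s_i(t)=s_i(t-1)[1-2\delta_{o_{i-1}(t-1),1}\delta_{c_i(t),0}]$ both respect the site swap. Since each depends only on $o_{i-1}(t-1)$ and on $(c_i,s_i)$, and the swap exchanges $o_0(t-1)\leftrightarrow o_1(t-1)$ while the inductive hypothesis $o_0(t-1)=o_1(t-1)$ makes the two incoming values equal, this verification is routine. Everything else reduces to the already-established single-site result, so no computation beyond Lemma~\ref{t:one} is required.
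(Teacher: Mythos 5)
Your proof is correct and follows essentially the same route as the paper's: both reduce the symmetric $L=2$, $N=2$ configuration to the single-site case and then invoke part~\ref{t:one-01} of Lemma~\ref{t:one}. The only difference is that you make the symmetry reduction rigorous via an explicit induction showing $o_0(t)=o_1(t)$, $c_0(t)=c_1(t)$, $s_0(t)=s_1(t)$ for all $t$, whereas the paper simply appeals to the indistinguishability of the particles and scatterers; your added detail is a welcome tightening, not a different argument.
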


\begin{proof}
Since the particles are indistinguishable and the scatterers are identical, it suffices to focus on a single scatterer. 
This case can therefore be reduced to case~\ref{t:one-01} of Lemma~\ref{t:one}, 
corresponding to the GKR model with $L=1$.
\end{proof}
An extension to specific GRK models with $L\ge 1$ and $N=L$ comes with the following Corollary.

\begin{corollary}
The same reasoning as the one adopted in the proof of Lemma~\ref{t:two} applies to any GKR model with $L \ge 1$, $N=L$ with initial configuration $o_i(0)=1$, $c_i(0)=0$ and $s_i(0)=-1$ for all $i \in \Lambda_L$.
\end{corollary}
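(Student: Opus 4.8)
The plan is to exploit the full cyclic symmetry of the prescribed initial datum and to show that the cellular automaton preserves it, so that the $L$-site evolution collapses onto the single-site recursion already analyzed in case~\ref{t:one-01} of Lemma~\ref{t:one}. Concretely, the initial configuration $o_i(0)=1$, $s_i(0)=-1$, $c_i(0)=0$ for all $i\in\Lambda_L$ is invariant under the cyclic shift $i\mapsto i+1$ on the ring. First I would prove, by induction on $t$, that this shift-invariance is propagated by the update rules: if the triple $(o_i(t-1),c_i(t-1),s_i(t-1))$ is independent of $i$, then each of the three rules, which expresses $x_i(t)$ through $x_i(t-1)$ and $x_{i-1}(t-1)$ in a translation-covariant way, returns a value again independent of $i$. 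This uses that $|s_i(0)|=1$ at every site and that $o_{i-1}(t-1)$ equals the common occupation value, exactly as in the reduction invoked for Lemma~\ref{t:two}.

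Granting this, the state triple $(o_i(t),c_i(t),s_i(t))=:(o(t),c(t),s(t))$ is the same at every site for all $t\ge0$, and substituting shift-invariance into the three update rules yields precisely the scalar recursion
\begin{align}
o(t) &= o(t-1)\,s(t-1),\nonumber\\
c(t) &= \big[c(t-1)+\delta_{o(t-1),1}\big]\!\!\mod r,\nonumber\\
s(t) &= s(t-1)\big[1-2\,\delta_{o(t-1),1}\,\delta_{c(t),0}\big],\nonumber
\end{align}
with initial data $o(0)=1$, $s(0)=-1$, $c(0)=0$. This is exactly the recursion governing the $L=1$ system of case~\ref{t:one-01} of Lemma~\ref{t:one}. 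By that result the common triple reaches the frozen value $o=-1$, $s=1$, $c=0$ at time $2r-1$, whence, by shift-invariance, the entire ring is in the frozen state $o_i=-1$, $s_i=1$, $c_i=0$ for all $i\in\Lambda_L$ at time $2r-1$, which is the asserted conclusion.

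The main obstacle is the inductive verification that cyclic invariance is preserved, which is the genuine content hidden behind the phrase ``the same reasoning as in Lemma~\ref{t:two}.'' Once that is in place there is no residual interaction to control: although the GKR generically displays indirect, scatterer-mediated coupling, the perfect uniformity of the datum makes every site receive an identically prepared particle at each step, so the apparent many-body dynamics degenerates into $L$ synchronized copies of the one-site problem. I would emphasize that this degeneracy hinges essentially on $N=L$ (every site occupied) together with all scatterers active and all counters vanishing; any departure from uniformity breaks the shift symmetry and reinstates the nontrivial coupling already visible in Tables~\ref{tab:table2} and~\ref{tab:table3}.
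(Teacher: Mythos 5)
Your proposal is correct and follows essentially the same route as the paper: the paper's proof of Lemma~\ref{t:two} (to which the corollary appeals) also reduces the fully occupied, fully active, zero-counter configuration to the single-site recursion of case~\ref{t:one-01} of Lemma~\ref{t:one} by symmetry of indistinguishable particles and identical scatterers. Your explicit induction showing that cyclic shift-invariance is propagated by the update rules, and the resulting scalar recursion, simply make rigorous the symmetry reduction that the paper states in one line, and your conclusion (frozen state at time $2r-1$) agrees with it.
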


\section{Conclusion}
\label{s:concl}

The classical Kac ring is widely recognized as a highly idealized yet paradigmatic model for studying the emergence of irreversibility from microscopic reversible dynamics. In this paper, we introduce and analyze a more general variant, the Generalized Kac Ring (GKR) model. Unlike the original Kac ring, where the environment consists of inert scatterers, the GKR incorporates scatterers with internal states that evolve through interactions with the particles. This makes the GKR self-consistent, coupling particle and environment dynamics within a unified deterministic framework. In particular, the model includes a feedback mechanism in which particles modify the scatterers, which in turn influence particle motion, generating indirect and non-local interactions without direct collisions (cf.~\cite{Colangeli2023} for a continuous-time billiard variant). Such a mechanism aligns the GKR with several classes of physical systems in which mobile carriers and the medium co-evolve, potentially on different time scales.
This scenario is commonly found, for instance, in solid-state physics of electrical conductors or in plasma physics, where electrons are often treated as interacting only with the ions which constitute their medium. Photons in a black-body cavity provide another analogy, as they interact with the atoms of the walls but not directly with each other.
In many physical contexts, these interactions do not alter the state of the scatterers, which are much heavier than the particles. However, the effect depends on the microscopic details. For example, a single electron–ion collision may have negligible impact on the ions, but repeated collisions can alter their state and eventually lead to thermal equilibration of the two species, passing through a transient stage in which electrons influence ions and the evolving ion states elicit a feedback on the electrons. The smaller the electron-to-ion mass ratio, the longer this equilibration requires. In our framework, this feature is codified by the rigidity of the scatterers. Quantum effects may introduce additional channels of indirect interaction, such as excitation or ionization of the heavier particles.
Other analogies arise in various toy models of statistical mechanics. Examples include kinetic Ising-type models and spintronics, where conduction electrons exchange angular momentum with localized spins, thereby modifying the medium’s magnetization and altering subsequent transport properties \cite{Kubo1966,Ohno2014}. Models of annealed disorder \cite{Bouchaud1990} provide another instance, where transport occurs through a medium that adapts dynamically to the carriers. A classical example comes from polaron physics, where charge carriers distort the surrounding lattice, and the modified lattice in turn alters the carrier dynamics \cite{Alexandrov2010}. Beyond interacting particle systems, reaction–diffusion models exhibit similar features, as mobile reactants can alter the state of catalytic sites, thereby influencing future reaction pathways \cite{vanKampen1992,Dickman1999}. A simple physical example is a mixture of two gases of comparable molecular mass, where one gas is sufficiently rarefied that self-collisions are negligible.
It should be emphasized that not all interactions are equivalent. Some permit the establishment of local thermodynamic equilibrium, while others do not, depending sensitively on the environment in which they occur. In highly confining media, for instance, anomalous transport of matter or energy is often observed, as in single-file diffusion or in the Knudsen regime. Furthermore, indirect interactions imposed by Gaussian or Nos\`{e}-Hoover deterministic thermostats do not guarantee convergence to thermodynamic behavior. The reliable foundation of thermodynamics remains in systems of particles with short-range repulsive cores, possibly complemented by short-range attractive tails.
These caveats are not of concern here, as our goal is to extend the classical Kac ring model to a setting where interactions play a more central role. We have shown that the phenomenology of our model is richer and that certain parameters, such as the rigidity of the scatterers, which control the two time scales of particle and scatterer evolution, determine the temporal evolution of the system. Remarkably, the resulting behaviors reveal the emergence of multiple attractors, corresponding to periodic orbits, thereby reflecting, in a highly idealized setting, a diversity observed in systems of physically relevant interacting constituents.

\begin{acknowledgments}
ENMC thanks the PRIN 2022 project ``Mathematical modelling of heterogeneous systems'' (code: 2022MKB7MM, CUP: B53D23009360006). 
The research of MC has been developed in the framework of the Research Project
National Centre for HPC, Big Data and Quantum Computing-PNRR Project, funded by the
European Union-Next Generation EU. 
LR gratefully acknowledges support from the Italian Ministry of University and Research (MUR) through the grant
PRIN2022-PNRR project (No. P2022Z7ZAJ) ``A Unitary Mathematical Framework for Modelling Muscular Dystrophies''
(CUP: E53D23018070001). This research was performed under the auspices of Italian National Group of Mathematical
Physics (GNFM) of the National Institute for Advanced Mathematics - INdAM.
\end{acknowledgments}


\bibliographystyle{unsrt}

\bibliography{bccr-kac}

\end{document}